\documentclass[11pt]{article}

\usepackage{fullpage}

\usepackage{comment}
\usepackage{amsmath}
\usepackage{amsthm}
\usepackage{amssymb}
\usepackage{url}
\usepackage[final]{showkeys}
\usepackage[usenames,dvipsnames]{xcolor}
\usepackage{graphicx}

\usepackage{stmaryrd}

\usepackage[colorlinks=true, allcolors=blue]{hyperref}

\usepackage{latexsym}

\setlength{\textwidth}{6.6in}
\setlength{\evensidemargin}{-0.1in}
\setlength{\oddsidemargin}{-0.1in}
\setlength{\topmargin}{-0.2in}
\setlength{\textheight}{9.3in}
\setlength{\textfloatsep}{8pt}
\setlength{\topsep}{6.0pt plus 2.0pt minus 2.0pt}

\setlength{\parindent}{0pt}
\setlength{\parskip}{3pt plus 2pt}

\setlength{\abovecaptionskip}{5pt plus 1pt minus 1pt}
\setlength{\belowcaptionskip}{8pt plus 1pt minus 1pt}

\usepackage[vlined]{algorithm2e}
\DontPrintSemicolon
\SetVlineSkip{0pt}
\NoCaptionOfAlgo
\SetKwIF{If}{ElseIf}{Else}{if}{:}{else if}{else:}{}%
\SetKwFor{While}{while}{:}{}%
\SetKwProg{Fn}{}{:}{}

\newtheorem{theorem}{Theorem}[section]
\newtheorem{lemma}[theorem]{Lemma}

\newtheorem{corollary}[theorem]{Corollary}
\newtheorem{definition}[theorem]{Definition}

\newcommand{\OO}{\mathrm{O}}

\newcommand{\on}[2]{\mathop{\null#2}\limits^{#1}}
\newcommand{\bvec}[1]{\on{\,{}_\leftarrow}{{#1}}}

\newcommand{\RR}{\mathbb{R}}

\newcommand{\es}{\emptyset}
\newcommand{\sm}{\setminus}

\newcommand{\EQ}{\;=\;}

\newcommand{\GE}{\;\ge\;}

\newcommand{\Heap}{\mbox{{\it min\/}-{\it heap}}}
\newcommand{\Insert}{\mbox{{\it insert}}}
\newcommand{\DeleteMin}{\mbox{{\it delete\/}-{\it min}}}
\newcommand{\DecreaseKey}{\mbox{{\it decrease\/}-{\it key}}}

\newcommand{\Queue}{\mbox{{\it Queue}}}
\newcommand{\InsertLast}{\mbox{{\it insert\/}-{\it last}}}

\newcommand{\EBF}{\mbox{\sl e-BF\/}}
\newcommand{\EDJ}{\mbox{\sl e-Dijkstra\/}}

\title{Optimal energetic paths for electric cars}

\author{Dani Dorfman\thanks{Blavatnik School of Computer Science, Tel Aviv University, Israel. Email: {\tt dani.i.dorfman@gmail.com}, {\tt \{haimk,zwick\}@tau.ac.il.} Work of Uri Zwick partially supported by grant 2854/20 of the Israeli Science Foundation.} \and Haim Kaplan${}^*$ \and Robert E.\ Tarjan\thanks{Department of Computer Science, Princeton University. Email: {\tt ret@princeton.edu.} Partially supported by a gift from Microsoft.} \and Uri Zwick${}^*$}

\date{}

\begin{document}

\maketitle

\begin{abstract}%
\setlength{\parindent}{0pt}%
\setlength{\parskip}{3pt plus 2pt}%
\noindent
A weighted directed graph $G=(V,A,c)$, where $A\subseteq V\times V$ and $c:A\to \RR$, naturally describes a road network in which an electric car can roam. An arc $uv\in A$ models a road segment connecting the two vertices (junctions) $u$ and~$v$. The cost~$c(uv)$ of the arc $uv$ is the amount of \emph{energy} the electric car needs to travel from~$u$ to~$v$. This amount may be positive, zero or negative. To make the problem realistic, we assume that there are no negative cycles in the graph.

The electric car has a battery that can store up to~$B$ units of energy. The car can traverse an arc $uv\in A$ only if it is at~$u$ and the charge~$b$ in its battery satisfies $b\ge c(uv)$. If the car traverses the arc $uv$ then it reaches~$v$ with a charge of $\min\{b-c(uv),B\}$ in its battery. Arcs with a positive cost deplete the battery while arcs with negative costs charge the battery, but not above its capacity of~$B$. If the car is at a vertex~$u$ and cannot traverse any outgoing arcs of~$u$, then it is stuck and cannot continue traveling.

We consider the following natural problem: Given two vertices $s,t\in V$, can the car travel from~$s$ to~$t$, starting at~$s$ with an initial charge~$b$, where $0\le b\le B$? If so, what is the maximum charge with which the car can reach~$t$? Equivalently, what is the smallest $\delta_{B,b}(s,t)$ such that the car can reach~$t$ with a charge of $b-\delta_{B,b}(s,t)$ in its battery, and which path should the car follow to achieve this? We refer to $\delta_{B,b}(s,t)$ as the \emph{energetic cost} of traveling from~$s$ to~$t$. We let $\delta_{B,b}(s,t)=\infty$ if the car cannot travel from~$s$ to~$t$ starting with an initial charge of~$b$. The problem of computing energetic costs is a strict generalization of the standard shortest paths problem.

We show that the single-source version of the minimum energetic paths problem can be solved using simple, but subtle, adaptations of the classical Bellman-Ford and Dijkstra algorithms. To make Dijkstra's algorithm work in the presence of negative arc costs, but no negative cycles, we use a variant of the $A^*$ search heuristic.  This is similar to the idea used by Johnson to solve the standard all-pairs shortest paths problem.
These results are explicit or implicit in some previous papers. We provide a clearer, simpler and unified description of these algorithms.
\end{abstract}

\setcounter{page}{1}

\section{Introduction}\label{S:intro}

A weighted directed graph $G=(V,A,c)$, where $A\subseteq V\times V$ and $c:A\to \RR$, naturally describes a road network in which an electric car can roam. An arc $uv\in A$ models a road segment connecting the two vertices (junctions) $u$ and~$v$. The cost~$c(uv)$ of the arc $uv$ is the amount of \emph{energy} the electric car needs to travel from~$u$ to~$v$. This amount may be positive, e.g., if the road segment is uphill or level; zero; or negative, e.g., if the road segment is downhill. To make the problem realistic we assume that there are no negative cycles in the graph.

An electric car is equipped with a battery that can store up to~$B$ units of energy, where $B>0$ is a parameter. We assume that the electric car cannot be charged along the way and has to rely on the initial charge available in its battery. If the car is currently at vertex~$u$ with charge $b$ in its battery, where $0\le b\le B$, then it can traverse an arc $uv\in A$ if and only if $c(uv)\le b$. If this condition holds, and the car traverses the arc, then it reaches~$v$ with a charge of $\min\{b-c(uv),B\}$. In particular, the car can traverse $uv$ if $b-c(uv)>B$ (which can hold only if $c(uv)<0$), but the battery does not charge beyond its capacity of~$B$. We may assume that $c(uv)\in[-B,B]$ for every $uv\in A$, as arcs with $c(uv)>B$ can never be used, and thus can be removed, and costs $c(uv)<-B$ can be changed to $c(uv)=-B$.

We consider the following natural problem: Given two vertices $s,t\in V$, can the car travel from~$s$ to~$t$, starting at~$s$ with an initial charge~$b$, where $0\le b\le B$? If so, what is the maximum charge with which the car can reach~$t$? Equivalently, what is the smallest $\delta_{B,b}(s,t)$ such that the car can reach~$t$ with a charge of $b-\delta_{B,b}(s,t)$ in its battery, and which path should the car follow to achieve this? We refer to $\delta_{B,b}(s,t)$ as the \emph{energetic cost} of traveling from~$s$ to~$t$. We let $\delta_{B,b}(s,t)=\infty$ if the car cannot travel from~$s$ to~$t$ starting with an initial charge of~$b$. Note that the energetic cost depends on the capacity~$B$ of the battery and on the initial charge~$b$ of the battery at~$s$. Increasing the capacity~$B$ of the battery cannot increase energetic costs. Increasing the initial charge~$b$, on the other hand, can either decrease or increase the energetic cost $\delta_{B,b}(s,t)$. If the initial charge~$b$ is not large enough, the car will not be able to travel from~$s$ to~$t$, i.e., $\delta_{B,b}(s,t)=\infty$,
or the car may be forced to use energetically expensive detours.\footnote{For example, a two-arc path with costs $b$ and $-b$, for some $b>0$, modeling a mountain crossing, has an energetic cost of~$0$, but it can be used only if the initial charge of the battery is at least~$b$.} On the other hand, if the battery is almost fully charged, i.e., $b$ is close to~$B$, the car will not be able to take full advantage of downhill segments in the beginning the journey. In such cases increasing~$b$ may also increase $\delta_{B,b}(s,t)$.

We also consider the related problem of finding the \emph{minimum initial change} at~$s$ that will allow the car to travel to~$t$. We denote this quantity by $\beta_B(s,t)$. Note that $\beta_B(s,t)$ is the smallest~$b$ for which $\delta_{B,b}(s,t)<\infty$, or $\beta_B(s,t)=\infty$ if there is no such~$b$. We show that this problem is equivalent to computing energetic costs on the \emph{reverse} graph.

If all arc costs are non-negative, then $\delta_{B,b}(s,t)=\delta(s,t)$, if $\delta(s,t)\le b\le B$, where $\delta(s,t)$ is the standard distance from~$s$ to~$t$ in the graph~$G$. Otherwise, $\delta_{B,b}(s,t)=\infty$. More generally, even in the presence of negative arc costs, but no negative cycles, $\delta_{B,b}(s,t)=\delta(s,t)$ if and only if there exists a shortest path~$P$ from~$s$ to~$t$ such that the length of every prefix of~$P$ is in the interval $[b-B,b]$. In general, energetic costs may be larger than  distances, since the charge in the battery is constrained to remain in the interval $[0,B]$, i.e., it is not allowed to go negative and it is capped at~$B$. (For example, the electric car may not be able to traverse a mountain pass and may need to take a detour.) We do always have $\delta_{B,b}(s,t)\ge \delta(s,t)$.

The problem of computing minimum energetic costs is thus a strict generalization of the standard shortest paths problem. We show, however, that the single-source version of the energetic cost problem can still be solved using simple, but sometimes subtle, adaptations of the classical Bellman-Ford \cite{Bellman58,Ford56} and Dijkstra \cite{Dijkstra59} algorithms. To make Dijkstra's algorithm work in the presence of negative arc costs but no negative cycles, we use a variant of the $A^*$ search heuristic (see, e.g., Hart et al.~\cite{HNR68}).  This is similar to the idea used by Johnson~\cite{Johnson77} to solve the standard all-pairs shortest paths problem.

Unlike the standard shortest paths problem, the single-target version of the energetic costs problem is \emph{not} equivalent to the single-source version.  In particular, We cannot solve the single-target problem by running the adapted Bellman-Ford and Dijkstra algorithms backward.  Although there is always a tree of minimum energetic-cost paths from a source vertex~$s$ to all other vertices reachable from it, there are simple examples in which there is no tree of minimum energetic-cost paths to a target vertex~$t$, from all vertices that can reach it.

As mentioned in the abstract, the versions of the Bellman-Ford and Dijkstra algorithms presented here are explicit or implicit in some previous papers. Some other papers describe versions that are not as efficient as the versions described here. We try to present and prove the correctness of these two algorithms in the simplest and clearest possible way.

\subsection{Our results}\label{ss:our}

We show that the single-source version of the minimum energetic paths problem can be solved in $\OO(mn)$ time using a simple adaptation of the Bellman-Ford \cite{Bellman58,Ford56} algorithm, where $m=|A|$ and $n=|V|$. Furthermore, if a \emph{valid potential function} $p:V\to\RR$ is given, i.e., a function for which $c(uv)-p(u)+p(v)\ge 0$ for every $uv\in A$, then the single-source version can be solved in $\OO(m+n\log n)$ time using an adaptation of Dijkstra's \cite{Dijkstra59} algorithm similar to $A^*$ search (see, e.g., Hart et al.~\cite{HNR68}). Since a valid potential function can be found in $\OO(mn)$ time using the standard Bellman-Ford algorithm, the all-pairs version of the minimum energetic paths problem can be solved $\OO(mn+n^2\log n)$ time, essentially matching the bound for the standard APSP problem.

\subsection{Related results}\label{ss:related}

Various adaptations of the Bellman-Ford and Dijkstra algorithms for problems similar to or equivalent to the minimum energetic paths problem defined here were given by several authors. Artmeier et al.~\cite{artmeier2010shortest} give versions that run in $\OO(n^3)$ and $\OO(n^2)$ time, respectively. Brim and Chalupka \cite{BrCh12} give versions of these algorithms with the same running times as ours, but their descriptions of the algorithms are not easy to follow, as they define the problem differently and use some non-natural conventions, and the algorithms are only described as parts of a more complicated algorithm for the solution of one-player and two-player \emph{energy games}. (For more on energy games, see also Brim et al.~\cite{BCDGR11} and Dorfman et al.~\cite{DKZ19}.)  Baum et al.~\cite{baum2020energy} give a version of Dijkstra's algorithm but with a much slower running time.  They also show that the maximum charge with which~$t$ can be reached when starting at~$s$ with charge~$b$ is a piece-wise linear function of~$b$ with at most $\OO(n)$ breakpoints.

Khuller et al. \cite{khuller2011fill} consider a related problem in which the battery (or the fuel tank) can be recharged at intermediate vertices, with a possibly different price per unit of charge at each intermediate vertex. All arc costs are non-negative. They give various algorithms for computing a cheapest path from~$s$ to~$t$. Among these algorithms is a $\OO(n^2\Delta\log n)$-time algorithm for the single-target version, where~$\Delta$ is a bound on the number of rechargings allowed, and an $\OO(n^3\Delta^2)$-time algorithm for the all-pairs version.

Several authors, including Lehmann \cite{Lehmann77}, Tarjan \cite{Tarjan81a} and Mohri \cite{Mohri02} considered generalized versions of the shortest paths problem defined by \emph{semirings}. If $(R,\oplus,\otimes)$ is a semiring and $P$ is an $s$-$t$ path whose arcs have costs $c_i$, the cost of~$P$ is defined to be $c(P)=\otimes_{i=1}^k c_i$. The goal is to find $\oplus_P\, c(P)$, where $P$ ranges over all $s$-$t$ paths, assuming this quantity is well defined. The standard shortest paths problem corresponds to the \emph{tropical} semiring $(\RR,\min,+)$. All these results assume, as part of the definition of a semiring, that $\otimes$ is associative. Thus, as we shall see, none of these results apply to our problem, as our operation~$\otimes$ is not associative.

Generalized versions of Dijkstra's algorithm were obtained by various authors, most notably by Knuth \cite{Knuth77}. These generalization are of a different nature and are apparently not related to the version given here.

Other non-standard versions of the shortest paths problem were also considered. Perhaps the most famous one is the \emph{bottleneck} shortest paths problem. See, e.g., \cite{GaTa88,CKTZZ16} for the single-pair version, and \cite{VWY09,DuPe09a} for the all-pairs version. Vassilevska \cite{Vassilevska08} considered an interesting non-standard \emph{non-decreasing} version of the shortest paths problem related to reading train schedules.
Finally, Madani et al.~\cite{MTZ10} considered the \emph{discounted} shortest paths problem. All these problems are quite different from the problem considered here.

\subsection{Organization of the paper}\label{ss:organization}

In the next section we formally define the minimum energetic paths problem.
In Section~\ref{S:E-BF} we describe the modified version of the Bellman-Ford algorithm \cite{Bellman58,Ford56}.
In Section~\ref{S:E-Dijk} we describe the modified version of Dijkstra's algorithm~\cite{Dijkstra59}. In Section~\ref{S-min-init} we consider two closely related problem, minimum \emph{initial-energy} paths and maximum \emph{final-energy} paths. We end in Section~\ref{S:concl} with some concluding remarks and open problems.

\section{Minimum energetic paths}\label{S:minimum-charge}

To simplify the presentation, we concentrate on the computation of $\delta_B(s,t)=\delta_{B,B}(s,t)$, i.e., the energetic cost of traveling from~$s$ to~$t$ when starting with a fully charged battery of capacity~$B$. There is a simple reduction from the problem of computing $\delta_{B,b}(s,t)$, for an arbitrary $0\le b\le B$, to that of computing $\delta_B(s,t)$: Add a new vertex $s'$ and an arc $s's$ of cost $c(s's)=B-b$. Then, it is easy to see that $\delta_{B,b}(s,t)=\delta_B(s',t)-(B-b)$. A similar idea can be incorporated directly into the algorithms that we describe.

We begin with a definition of the energetic cost of a path.

\begin{definition}[Energetic cost of a path]
    If a path $P=u_0u_1\ldots u_k$ is traversable, when starting from~$u_0$ with a fully charged battery of capacity~$B$, then the final charge in the battery when reaching~$u_k$ is $B-d_{B}(P)$, where $d_{B}(P)$ is defined to be the
    the \emph{energetic cost} of the path. Note that $0\le d_{B}(P)\le B$. If the path is not traversable, we let $d_{B}(P)=\infty$.
\end{definition}
To obtain a simple formula for $d_{B}(P)$ we define the following operations:
\[ x\oplus_B y \EQ [x+y]_{\,0}^B \quad,\quad
 [z]_{\,0}^B \EQ \left\{\begin{array}{cl}
0 & \mbox{if $z<0$} \\
z & \mbox{if $0\le z\le B$} \\
\infty & \mbox{otherwise}
\end{array}\right. \]
We assume that $x+\infty=\infty+x=\infty$ for every $x\in[-B,B]\cup\{\infty\}$. For brevity, we sometimes write $x \oplus y$ instead of $x \oplus_B y$ when~$B$ is understood from the context.~\footnote{Note that $\oplus$ is not related to the semiring framework mentioned in Section~\ref{ss:related}.} Note that for every $x,y\in \RR$ and $B>0$ we have $x+y\le x\oplus_B y$. It is important to note that $\oplus_B$ is \emph{not} associative. (For example, $B\oplus(B\oplus -B)=B$ while $(B\oplus B)\oplus -B=\infty$, and $(-1\oplus -2)\oplus 2=2$ while $-1\oplus(-2\oplus 2)=0$, assuming $B\ge 2$.)

\begin{lemma}\label{L-d}
Let $P=u_0u_1\ldots u_k$ be a directed path, let $P'=u_0\ldots u_{k-1}$, and let $c_i=c(u_{i-1}u_{i})$, for $i=0,1,\ldots,k$. Let~$B$ be the capacity and initial charge of battery at~$u_0$. If $k=0$ then $d_{B}(P)=0$.
If $k>0$ then
\[ d_{B}(P) \EQ d_{B}(P')\oplus c_k \EQ
((\cdots((0 \oplus c_1)\oplus c_2)\oplus\cdots)\oplus c_{k-1})\oplus c_k \;.\]
\end{lemma}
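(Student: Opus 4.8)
The plan is to first establish the single-step recurrence $d_B(P) = d_B(P') \oplus_B c_k$ directly, and then obtain the fully expanded left-associated expression by unrolling this recurrence down to the trivial base case. The base case $k = 0$ is immediate: a zero-length path performs no traversal, so the battery stays at its full charge $B$, giving final charge $B - 0$ and hence $d_B(P) = 0$.

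The core of the argument is a translation of the physical battery dynamics into the clamp operation. I would introduce $\delta_i = B - b_i$, where $b_i$ is the charge on reaching $u_i$, so that $\delta_i$ is exactly the energetic cost of the prefix $u_0 \ldots u_i$ and $\delta_0 = 0$. The traversal rule then says two things about the step from $u_{i-1}$ to $u_i$: the arc can be crossed precisely when $c_i \le b_{i-1}$, i.e.\ when $\delta_{i-1} + c_i \le B$; and when it is crossed the new charge is $b_i = \min\{b_{i-1} - c_i,\, B\}$, which rearranges to $\delta_i = \max\{\delta_{i-1} + c_i,\, 0\}$. Reading these two facts against the definition of $[\,\cdot\,]_0^B$, both are captured by the single identity $\delta_i = \delta_{i-1} \oplus_B c_i$.

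For the recurrence itself I would case on whether $P'$ is traversable. If $d_B(P') = \infty$, then $P$ is a fortiori non-traversable, so $d_B(P) = \infty$, while $\infty \oplus_B c_k = \infty$ by the convention $c_k + \infty = \infty$; the two sides agree. If $d_B(P')$ is finite, the car reaches $u_{k-1}$ with charge $B - d_B(P')$, and the dictionary above gives $d_B(P) = [\,d_B(P') + c_k\,]_0^B = d_B(P') \oplus_B c_k$, the value being finite exactly when $d_B(P') + c_k \le B$ and equal to $\infty$ (the stuck case) otherwise. Applying this recurrence $k$ times, peeling off one arc at each stage down to $d_B(u_0) = 0$, yields the claimed expression $((\cdots(0 \oplus c_1) \oplus \cdots) \oplus c_{k-1}) \oplus c_k$.

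The one point deserving explicit emphasis is the seemingly reversed correspondence forced by the substitution $\delta = B - b$: the upper threshold $B$ in $[\,\cdot\,]_0^B$ encodes the \emph{lower} bound on the charge required to traverse an arc (its violation being the stuck case, mapped to $\infty$), while the lower threshold $0$ encodes the \emph{upper} cap $B$ on the battery (surplus energy being discarded once it is full). Matching each clamp to the correct physical event, and checking that $\infty$ propagates correctly through $\oplus_B$ once a prefix becomes non-traversable, is where the care is needed; everything else is routine algebra.
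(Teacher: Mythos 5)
Your proof is correct and follows essentially the same route as the paper: the paper also introduces the depletion $d_i = B - b_i$, notes $d_0 = 0$, and asserts the inductive recurrence $d_i = d_{i-1} \oplus_B c_i$, which is exactly your single-step identity. The only difference is that the paper leaves the verification of this recurrence as ``easy to prove by induction,'' whereas you spell out the dictionary between the traversal condition, the battery cap, and the two clamps of $[\,\cdot\,]_0^B$, together with the $\infty$-propagation --- precisely the details the paper omits.
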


\begin{proof}
Let $b_i$ be the charge of the battery at~$u_i$ and let $d_i=B-b_i$ be the \emph{depletion} of the battery at~$u_i$, for $i=0,1,\ldots,k$. Clearly $d_0=0$. It is easy to prove by induction that $d_i=d_{i-1}\oplus_B c_i$ and the lemma follows.
\end{proof}

As mentioned, the operation $\oplus_B$ is \emph{not} associative. Thus, in general, $d_{B}(P)\ne
0 \oplus (c_1 \oplus (\cdots \oplus(c_{k-2}\oplus(c_{k-1}\oplus c_k))\cdots))$.
In Section~\ref{S-min-init} we show, however that $c_1 \oplus (c_2 \oplus (\cdots \oplus(c_{k-1}\oplus(c_k\oplus 0))\cdots))$ also has an interesting meaning.

\begin{definition}[Energetic costs, minimum energetic paths]
    The \emph{energetic cost} $\delta_{B}(s,t)$ of traveling from~$s$ to~$t$ when starting from~$s$ with a fully charged battery of capacity~$B$ is defined as
\[ \delta_{B}(s,t) \EQ \min\{\, d_{B}(P) \mid \mbox{$P$ is an $s$-$t$ path in~$G$}\, \} \;. \]
If $\delta_{B}(s,t)<\infty$ and $P$ is an $s$-$t$ path satisfying $\delta_{B}(s,t)=d_{B}(P)$, then $P$ is said to be a \emph{minimum energetic path} from~$s$ to~$t$.
\end{definition}

Since we assume that there are no negative cycles in the graph, it is easy to see that for every path~$P$ from~$s$ to~$t$ there is a simple path $P'$ such that $d_{B}(P')\le d_{B}(P)$. (For this to hold it is in fact enough to require that there are no traversable negative cycles in the graph.) Thus, the minimum in the definition above can be taken over simple paths only.

It is interesting to note that $\delta_{B}(s,t)$ is well-defined also in the presence of traversable negative cycles, but minimum energetic paths are not necessarily simple in this case. In a companion paper \cite{DKTZ23} we obtain efficient algorithms that work in the presence of negative cycles. These algorithms, however, are substantially more complicated than the algorithms presented here and rely on new algorithmic techniques.

It is not difficult to see, and it will also follow from the correctness of the algorithms that we present in the next sections, that for every source vertex $s\in V$ there is always a tree of minimum energetic paths to all other vertices that can be reached from it. The simple example given in Figure~\ref{F-example} shows that a tree of minimum energetic paths to a given target vertex~$t$ does not always exist.

\begin{figure}[t]
\begin{center}
\includegraphics[scale=0.50]{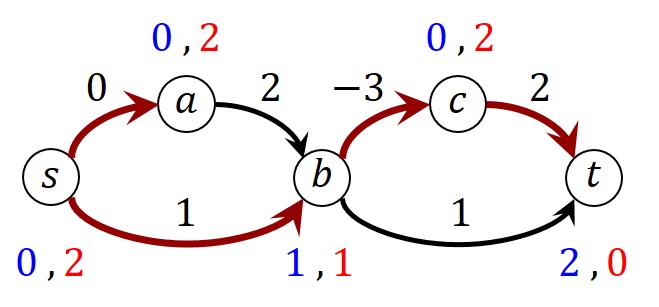}
\end{center}
\caption{A simple but illustrative example. Assume $B\ge 3$. The two numbers next to each vertex~$u$ are $\delta_B(s,u)$ and $\delta_B(u,t)$. The bold arcs constitute a tree of minimum energetic paths from the source vertex~$s$ to all other vertices. Another such tree can be obtained by replacing the arc $ct$ by $bt$. On the other hand, the only minimum path from $a$ to~$t$ is $abct$, while the only minimum path from~$b$ to~$t$ is the arc $bt$. Thus, there is no tree of minimum paths to~$t$ from all other vertices.}
\label{F-example}
\end{figure}

\section{An energetic version of the Bellman-Ford algorithm}\label{S:E-BF}

Recall that
for any $x$ and $y$, $x\oplus_B y \geq x+ y$.  This implies that in a graph without negative cycles,
if there is a traversable path from $s$ to $t$, there is such a path that is simple and hence contains at most $n-1$ arcs.  This means that if there are no negative cycles, we can solve the single-source minimum energetic paths problem using the Ford-Bellman \cite{Bellman58,Ford56} shortest path algorithm: We merely replace $+$ by $\oplus_B$.

We base our description of the algorithm on that in~\cite{Tarjan83}, which uses a queue as suggested by Gilsinn and Witzgall~\cite{GiWi73}.

The algorithm maintains a tentative energetic cost $d(v)$ for each vertex $v$, equal to the minimum of the energetic costs of paths from $s$ to $v$ found so far.   Initially $d(s)=0$ and $d(v)=\infty$ for $v \neq s$, where $s$ is the source.  It also maintains a queue $Q$, initially containing $s$.  The algorithm repeats the following step until $Q$ is empty:

\begin{quote}
Scan a vertex: Delete the front vertex $u$ on $Q$.  For each arc $uv$, if $\delta(u)\oplus_B c(uv) < \delta(v)$, \emph{relax} $uv$: Set $\delta(v) \gets \delta(u)\oplus_B c(uv)$, set $\pi(v)\gets u$, and add $v$ to the back of $Q$ if it is not on $Q$.
\end{quote}

Pseudocode of the algorithm, which we call $\EBF$, is given on the left of Figure~\ref{P-BF-D}. The correctness proof and analysis of the standard Bellman-Ford algorithm in the absence of negative cycles (see e.g., Tarjan~\cite{Tarjan83}) translates directly to this version.

\begin{theorem}\label{T-BF}
If $G=(V,A,c)$ has no traversable negative cycles
then $\EBF$ finds minimum energetic paths from~$s$ to all vertices in $\OO(mn)$ time.
\end{theorem}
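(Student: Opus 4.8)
The theorem asserts two things: correctness (the algorithm computes $\delta_B(s,v)$ for all $v$) and the $\OO(mn)$ running time. The key enabling observation, stated just before the theorem, is that $x \oplus_B y \ge x+y$, so energetic costs dominate ordinary path lengths; combined with the no-negative-cycle assumption this guarantees that for every traversable $s$-$v$ path there is a simple traversable one of no greater energetic cost (this is already proved in the excerpt via the reduction to simple paths). My plan is to mirror the classical Bellman-Ford correctness argument, since the text explicitly claims it ``translates directly,'' but to identify exactly which algebraic properties of $+$ the classical proof uses and check that $\oplus_B$ satisfies them --- the main subtlety, since $\oplus_B$ is \emph{not} associative.

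**Invariants and correctness.** First I would establish the standard labeling invariant: at every point in the execution, $d(v)$ equals $d_B(P)$ for some actual $s$-$v$ path $P$ (so $d(v) \ge \delta_B(s,v)$ always), and whenever $\pi$ is followed back from $v$ it traces such a path. This requires only that a relaxation $d(v) \gets d(u) \oplus_B c(uv)$ records a genuine path cost, which follows from Lemma~\ref{L-d}: if $d(u)=d_B(Q)$ for the $s$-$u$ path $Q$, then $d(u)\oplus_B c(uv)=d_B(Q\cdot uv)$. Here I must be careful that $d_B$ is built left-to-right as $(\cdots((0\oplus c_1)\oplus c_2)\cdots)\oplus c_k$, exactly the order in which the algorithm extends paths, so non-associativity causes no problem --- the relaxation appends the new arc on the \emph{right}, matching Lemma~\ref{L-d}. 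Second, I would prove that the algorithm terminates with $d(v)=\delta_B(s,v)$ by a ``rounds'' argument: organize the queue processing into passes and show by induction on $k$ that after $k$ passes, $d(v) \le d_B(P)$ for every $s$-$v$ path $P$ using at most $k$ arcs. The inductive step relaxes the final arc of such a path after its $(k-1)$-arc prefix has been correctly bounded; the prefix is itself a path whose cost equals its left-associated $\oplus_B$ value, so the induction closes. Since an optimal path may be taken simple with at most $n-1$ arcs, $n-1$ passes suffice.

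**The main obstacle.** The delicate point is the monotonicity needed to justify that relaxing along an optimal simple path in order yields the optimum. In the classical proof one uses that if $d(u)$ has reached its final (minimum) value, then $d(u)+c(uv)$ gives the correct extension. For $\oplus_B$ I must verify \emph{monotonicity in the left argument}: $x \le x'$ implies $x \oplus_B y \le x' \oplus_B y$ (treating $\infty$ as the top element). This holds because $x\mapsto [x+y]_0^B$ is nondecreasing. Crucially I do \emph{not} need associativity for the rounds argument, because each path is evaluated by Lemma~\ref{L-d} in the fixed left-to-right order that the algorithm respects; this is precisely why the non-associativity of $\oplus_B$ (the worry flagged in the excerpt) does no harm. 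I would then confirm the queue-based scheduling actually realizes the $k$-pass structure --- the standard argument that a vertex placed on $Q$ in pass $k$ is scanned during pass $k+1$ --- so correctness and the bound on the number of passes coincide.

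**Running time.** Finally, for the $\OO(mn)$ bound I would argue that there are at most $n-1$ passes (after which no $d(v)$ can change, since all values have reached $\delta_B(s,v)$ and no traversable negative cycle exists to allow further decrease), and each pass scans each vertex at most once and examines each outgoing arc once, for $\OO(m)$ work per pass. A clean way to phrase this is that each vertex is scanned at most $n-1$ times and total work is $\OO(mn)$. The absence of traversable negative cycles is exactly what prevents the queue from cycling indefinitely; I would note that this is the only place the hypothesis is used beyond guaranteeing simple optimal paths, and that $\oplus_B \ge +$ ensures a traversable negative cycle would be needed for any cost to keep decreasing. This completes the plan.
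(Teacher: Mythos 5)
Your proof is correct and follows essentially the same route as the paper's: the paper's argument is exactly the pass-based induction you describe (pass $k+1$ consists of the scans of vertices enqueued during pass $k$), showing that after $k$ passes $d(v)$ is optimal over paths of at most $k$ arcs, with $\OO(m)$ work per pass and $\OO(n)$ passes since simple paths suffice in the absence of traversable negative cycles. Your extra detail --- that only monotonicity of $\oplus_B$ (not associativity) is needed, and that the left-to-right evaluation in Lemma~\ref{L-d} matches the order in which the algorithm extends paths --- is precisely what the paper asserts informally when it says the classical proof ``translates directly'' and in the remark following the theorem.
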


\begin{proof}
   We define \emph{passes} over the queue.  Pass $0$ is the first scan step of~$s$.  Given that pass $k$ is defined, pass $k+1$ is the sequence of scan steps of vertices added to $Q$ during pass $k$.  A straightforward induction on $k$ shows that for each vertex $v$ that has a minimum-energy path of at most $k$ arcs, $d(v)$ is the energetic cost of such a path after $k$ passes.  It follows that the energetic costs are correctly computed. The $\pi$ values computed describe a tree of minimum-energy paths from~$s$ to all vertices reachable from~$s$ using a fully charged battery of capacity~$B$.
   Since each pass takes $\OO(m)$ time, the total time of the algorithm is $\OO(mn)$.
\end{proof}

In addition to the non-existence of negative cycles, the only thing required for correctness of the algorithm is that $\oplus_B$ is non-decreasing in its second argument: If $y \leq z$, $x\oplus y \leq x\oplus z$.

As in the standard version of the Bellman-Ford algorithm, one can add \emph{subtree disassembly}~\cite{Tarjan1981shortest,CGGTW09}, which does not improve the worst-case time bound but is likely to speed up the algorithm in practice. It is also easy to modify the algorithm so that it would find a traversable negative cycle that can be reached from~$s$, if any.

\begin{figure}[t]
\begin{center}
\begin{minipage}{2.8in}
\begin{algorithm}[H]
  \Fn{$\EBF(G=(V,A,c),B,s)$}{
  \BlankLine
    \For{$u\in V$}{
    $d(u)\gets \infty$ \;
    $\pi(u)\gets null$}
    \BlankLine
    $d(s)\gets 0$ \; 
    \BlankLine
    $Q\gets \Queue()$ \;
    $Q.\InsertLast(s)$ \;
    \BlankLine
    \While{$Q\ne\es$}{
    $u\gets Q.DeleteFirst()$ \;
    \For{$uv\in A$}{
    \If{$d(v)>d(u)\oplus_B c(uv)$}{
    $d(v)\gets d(u)\oplus_B c(uv)$ \;
    $\pi(v)\gets u$ \;
    \If{$v\notin Q$}
    {$Q.InsertLast(v)$ \;}
    }}
    }
    \BlankLine
    \Return $d$ \;
  }
\end{algorithm}
\end{minipage}
\begin{minipage}{3.5in}
\begin{algorithm}[H]
  \Fn{$\EDJ(G=(V,A,c),p,B,s)$}{
  \BlankLine
    \For{$u\in V$}{
    $d(u)\gets \infty$ \;
    $\pi(u)\gets null$ \;}
    \BlankLine
    $b(s)\gets 0$ \;
    \BlankLine
    $H\gets \Heap()$ \;
    $H.\Insert(s,p(s))$ \; 
    \BlankLine
    \While{$H\ne\es$}{
    $v\gets H.\DeleteMin()$ \;
    \For{$uv\in A$}
    {\If{$d(v)>d(u)\oplus_B c(uv)$}{
    $d(v)\gets d(u)\oplus_B c(uv)$ \;
    $\pi(v)\gets u$ \;
    \eIf{$v\notin H$}
    {$H.\Insert(u,d(v)+p(v))$ \;}
    {$H.\DecreaseKey(u,d(v)+p(v))$ \;}
    }}
    }
    \BlankLine
    \Return $d$ \;
  }
\end{algorithm}\end{minipage}
\end{center}
\caption{Energetic variants of the Bellman-Ford and Dijkstra algorithms.}\label{P-BF-D}
\end{figure}

The correctness of \EBF\ implies the following corollary, which we use to the prove the correctness of the energetic variant of Dijkstra's algorithm:

\begin{corollary}\label{C-verify}
    If each $d(v)<\infty$ corresponds to the energetic cost of some path from~$s$ to~$v$, and $d(v)\le d(u)\oplus_B c(uv)$ 
    for every $uv\in A$, then $d(v)=\delta_B(s,v)$, for every $v\in V$.
\end{corollary}

\section{An energetic version of Dijkstra's algorithm}\label{S:E-Dijk}

If all arc costs are non-negative, Dijkstra's algorithm \cite{Dijkstra59}  with $\oplus_B$ replacing $+$ will solve the one-source problem.  This algorithm replaces the queue $Q$ in the Bellman-Ford algorithm with a heap~$H$.  The key of a vertex $v$ in the heap is $d(v)$.  Each scan step deletes a vertex of minimum key from the heap.  When a relaxation decreases the key of a vertex in the heap, the algorithm does the appropriate \emph{decrease-key} operation on the heap.  If all arc costs are non-negative, the algorithm deletes each vertex from~$H$ at most once, and when a vertex $v$ is deleted from~$H$, $d(v)$ is the minimum energetic cost of a path from $s$ to $v$.  The proof of correctness mimics that of the standard Dijsktra algorithm. The algorithm does at most $n$ heap insertions, at most $n$ heap deletions, and at most $m$ decrease-key operations.  If the heap is a Fibonacci heap~\cite{FrTa87} or equally efficient data structure, e.g., \cite{HKTZ17}, the total running time is $\OO(m+n\log n)$. In fact, the algorithm is identical to the standard algorithm with~$d(v)$ values greater than~$B$ replaced by~$\infty$.

More interesting is that if arc costs can be negative but there are no negative cycles, we can use a variant of the $A^*$ search algorithm, which is a modification of Dijkstra's algorithm, to solve the one-source minimum energetic paths problem in $\OO(m+n\log n)$ time, provided that we have a \emph{valid potential function} $p:V\to\RR$. A potential $p$ is \emph{valid} if $c(uv)-p(u)+p(v) \ge 0$ for every arc $uv\in A$. It is well-known that a valid potential function exists if and only if the graph contains no negative cycles.

The $A^*$ search algorithm is almost identical to Dijkstra's algorithm. The only difference is that the \emph{key} of vertex~$v$ in the heap is $d(v)+p(v)$, and not just $d(v)$, where $p$ is a valid potential function. In the original setting of the $A^*$ search heuristic, $p(v)$ is an estimate of the distance from~$v$ to the destination~$t$. The correctness of the algorithm only requires, however, that $p$ is a valid potential function. If $p$ is valid, the $A^*$ algorithm deletes each vertex $v$ at most once from the heap, and when~$v$ is deleted, $d(v)=\delta_B(s,v)$, the energetic cost of traveling from~$s$ to~$v$.

An energetic version of the $A^*$ is obtained simply by replacing $+$ by $\oplus_B$ in relaxations. We assume the algorithm is given a potential $p$ that is valid for $+$, not $\oplus_B$. The algorithm begins with $d(s)=0$ and $d(v)=\infty$ for each vertex $v\in V\sm\{s\}$, and $H$ containing $s$.  The key of a vertex $v$ in $H$ is $d(v)+p(v)$. The algorithm repeats the following step until $H$ is empty:

\begin{quote}
  Scan a vertex: Delete from $H$ a vertex $u$ with minimum key $d(u)+p(u)$. For each arc~$uv$, if $d(u)\oplus_B c(uv) < d(v)$, \emph{relax} $uv$: Set $d(v) \gets d(u)\oplus_B c(uv)$; $\pi(v)\gets u$; add $v$ to $H$ with key $d(v)+p(v)$ if $v\notin H$, or decrease the key of $v$ to $d(v)+p(v)$ if $v\in H$.
\end{quote}

Pseudocode of the resulting algorithm, which we call \EDJ, is given on the right of Figure~\ref{P-BF-D}. The main step towards establishing the correctness of \EDJ\ is the following:

\begin{lemma}\label{L:A*}
If $p$ is a valid potential then \EDJ\ maintains the following invariant: if $u$ has been deleted from $H$ while $v$ has not been deleted from~$H$ yet, then $d(u)+p(u)\leq d(v)+p(v)$. As a consequence, each vertex~$u$ is inserted and deleted from~$H$ at most once.
\end{lemma}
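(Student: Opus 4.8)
The plan is to mimic the classical correctness proof of Dijkstra's algorithm / $A^*$ search, with the one nonstandard ingredient being the interaction of the clamping operator $\oplus_B$ with the potential~$p$. The whole argument rests on a single inequality, which I would establish first: for every arc $uv\in A$ and every finite value $d(u)$,
\[ (d(u)\oplus_B c(uv)) + p(v) \GE d(u)+p(u). \]
This is the \emph{non-negative reduced cost} property on which $A^*$ depends. Its proof combines the two facts recorded earlier: $x+y\le x\oplus_B y$ for all $x,y$, and the validity of~$p$, i.e.\ $c(uv)-p(u)+p(v)\ge 0$. If $d(u)\oplus_B c(uv)=\infty$ the inequality is trivial; otherwise $d(u)\oplus_B c(uv)\ge d(u)+c(uv)$, and adding $p(v)$ together with $c(uv)+p(v)\ge p(u)$ gives the claim. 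I expect this to be the only genuinely nonstandard step, and it is exactly where the \textbf{main obstacle} lies: even though $\oplus_B$ clamps into $[0,B]\cup\{\infty\}$ and is not associative, it can only \emph{increase} a value relative to ordinary addition, so validity of~$p$ for~$+$ alone suffices to keep reduced costs non-negative. In key language, this says that whenever arc $uv$ is relaxed from~$u$, the resulting key of~$v$ satisfies $d(v)+p(v)\ge d(u)+p(u)$.

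With this inequality in hand, the rest is standard Dijkstra bookkeeping. First I would note that every vertex ever placed in~$H$ has a finite $d$-value (only~$s$ is inserted, with $d(s)=0$, and any later (re)insertion is caused by a relaxation, which sets a finite $d$-value), and that $d(v)$ is non-increasing throughout the run. Next I would prove the monotonicity of deletion keys by induction on the number of scan steps: when a vertex~$u$ is deleted it has the minimum key $\kappa=d(u)+p(u)$ over~$H$; by the reduced-cost inequality every relaxation performed during the scan of~$u$ creates or updates a neighbor with key $\ge\kappa$, while all other heap entries keep their (already $\ge\kappa$) keys. Hence immediately after the scan the minimum key in~$H$ is still $\ge\kappa$, so the next deleted vertex has key $\ge\kappa$, and the keys at which vertices are deleted form a non-decreasing sequence.

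Finally I would rule out reinsertions, which simultaneously shows that a vertex's $d$-value is frozen once it is deleted. Suppose $u$ has been deleted with key~$\kappa$ and later, during the scan of some vertex~$x$, arc $xu$ is considered. Since $x$ is deleted after~$u$, monotonicity gives $d(x)+p(x)\ge\kappa$, and the reduced-cost inequality gives $(d(x)\oplus_B c(xu))+p(u)\ge d(x)+p(x)\ge\kappa$, so the prospective new value of $d(u)$ is at least $\kappa-p(u)$, which equals the current $d(u)$. Thus the relaxation test $d(u)>d(x)\oplus_B c(xu)$ fails, $d(u)$ never changes, and $u$ is never reinserted; each vertex is therefore inserted and deleted at most once.

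The invariant itself then follows directly: if $u$ has been deleted while $v$ has not, then $d(u)+p(u)$ equals $u$'s frozen deletion key $\kappa_u$, whereas $v$'s current key is $\ge$ its eventual deletion key (because $d(v)$ only decreases), which by monotonicity is $\ge\kappa_u$; and if $v$ has never been inserted then $d(v)=\infty$ and the inequality is trivial. Combining the three steps gives both the stated order invariant and the ``at most once'' consequence.
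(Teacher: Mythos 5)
Your proof is correct and follows essentially the same route as the paper's: both hinge on the single nonstandard inequality $(d(u)\oplus_B c(uv))+p(v)\ge d(u)+c(uv)+p(v)\ge d(u)+p(u)$, obtained from $x\oplus_B y\ge x+y$ and the validity of~$p$, embedded in the standard $A^*$/Dijkstra induction over the algorithm's steps. The only difference is organizational: the paper runs one induction that maintains the stated invariant directly (deriving non-reinsertion inside it), whereas you prove deletion-key monotonicity first, then rule out reinsertions, then read off the invariant — the mathematical content is identical.
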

\begin{proof}
We prove the lemma by induction on the number of heap operations.  The lemma is true initially, as no vertex was deleted from~$H$ yet.  Suppose it is true just before $u$ is deleted from $H$.  Since $d(u)+p(u)$ is \emph{minimum} among all $u\in H$, and since $d(u)=\infty$ for all vertices not yet inserted into~$H$, the invariant holds just after $u$ is deleted from $H$. By the induction hypothesis, $d(u)+p(u)$ is now \emph{maximum} over all $u$ already deleted from~$H$. Suppose the invariant holds just before the relaxation of an arc $uv$.  Just after the relaxation, $d(v)=d(u)\oplus_B c(uv) \geq d(u) + c(uv)$.  Hence
\[d(v)+p(v) \GE d(u)+c(uv)+p(v) \GE d(u) + p(u)\;,\] where the last inequality follows by the validity of $p$. Since the relaxation strictly decreased $d(v)$, it follows that $v$ could not have already been deleted from~$H$, as it would violate the claim that $d(u)+p(u)$ is maximum over all vertices already deleted from~$H$. Thus, $v$ is either in~$H$ or was not inserted into~$H$ yet. Decreasing the key of~$v$ to $d(v)+p(v)$, or inserting $v$ to $H$ with this key does not violate the invariant.
\end{proof}

The proof of Lemma~\ref{L:A*} is the same as the proof of the corresponding lemma for the standard version of $A^*$ except for the use of the inequality $x \oplus_B y \geq x+y$. Using Lemma~\ref{L:A*} we can easily prove the correctness of the algorithm.

\begin{theorem}\label{T-EDJ}
If $G=(V,A,c)$ has no negative cycles and $p$ is a valid potential for~$G$,
then $\EDJ$ finds minimum energetic paths from~$s$ to all vertices in $\OO(m+n\log n)$ time.
\end{theorem}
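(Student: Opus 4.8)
The plan is to establish correctness by invoking Corollary~\ref{C-verify}, whose two hypotheses I will verify at termination of \EDJ: (i) every finite $d(v)$ equals $d_B(P)$ for some genuine $s$-$v$ path~$P$, and (ii) $d(v)\LE d(u)\oplus_B c(uv)$ for every arc $uv\in A$. The running time follows directly from Lemma~\ref{L:A*}: since each vertex is inserted into and deleted from~$H$ at most once, there are at most~$n$ insertions and~$n$ delete-min operations; since each arc $uv$ is relaxed only during the single scan of~$u$, there are at most~$m$ relaxations and hence at most~$m$ decrease-key operations. With a Fibonacci heap~\cite{FrTa87}, the delete-min operations cost $\OO(n\log n)$ amortized and the insertions and decrease-keys cost $\OO(m+n)$, for a total of $\OO(m+n\log n)$.

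The crux of correctness is an observation already implicit in the proof of Lemma~\ref{L:A*}: once a vertex~$u$ is deleted from~$H$, its value $d(u)$ never changes again. Indeed, any later relaxation of an arc $wu$ would require $w$ to be scanned after~$u$, and the chain $d(u)+p(u)\LE d(w)+p(w)\LE (d(w)\oplus_B c(wu))+p(u)$ from Lemma~\ref{L:A*} forces $d(w)\oplus_B c(wu)\GE d(u)$, so no strict decrease is possible. I would first state this finality property explicitly. Given it, condition~(ii) follows: if $d(u)<\infty$ at termination then~$u$ was scanned (since~$H$ empties and every vertex with finite $d$-value was inserted), and at that scan the relaxation of $uv$ ensured $d(v)\LE d(u)\oplus_B c(uv)$; as $d(u)$ is then already final and $d(v)$ only decreases thereafter, the inequality persists to termination, while if $d(u)=\infty$ it is immediate.

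For condition~(i), I would trace the $\pi$-pointers. When the final assignment $\pi(v)\gets u$ is made it occurs during the scan of~$u$, so $d(u)$ is final and $d(v)=d(u)\oplus_B c(uv)$ at that moment. Because~$u$ was deleted strictly before~$v$, each step of the $\pi$-chain moves to an earlier-deleted vertex; the chain is therefore acyclic and terminates at~$s$, yielding a genuine $s$-$v$ path~$P$. An induction along~$P$, using the specific left-to-right bracketing of Lemma~\ref{L-d}, then shows $d_B(P)=d(v)$. With (i) and (ii) in hand, Corollary~\ref{C-verify} gives $d(v)=\delta_B(s,v)$ for every~$v$, and the $\pi$-pointers describe a tree of minimum energetic paths from~$s$.

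The main obstacle is the finality-upon-deletion step combined with the non-associativity of $\oplus_B$: unlike in standard Dijkstra, I cannot treat $d(v)$ as an ordinary sum of arc costs, and must rely throughout on Lemma~\ref{L-d} to identify $d(v)$ with the energetic cost of the \emph{specific} $\pi$-path, reading its parentheses from the source outward rather than in any other order. Once this identification and the finality property are secured, the remainder of the argument mirrors the classical correctness proof of $A^*$ search.
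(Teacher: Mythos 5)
Your proposal is correct and follows essentially the same route as the paper's own proof: finality of $d(u)$ upon deletion via Lemma~\ref{L:A*}, the relaxation inequalities $d(v)\le d(u)\oplus_B c(uv)$ at termination, Corollary~\ref{C-verify} for optimality, and the standard count of heap operations for the $\OO(m+n\log n)$ bound. You simply spell out two steps the paper leaves implicit — re-deriving the finality property and verifying hypothesis~(i) of Corollary~\ref{C-verify} by tracing the $\pi$-pointers with Lemma~\ref{L-d} — both of which you handle correctly.
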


\begin{proof}
    When a vertex $u$ is removed from~$H$, all outgoing arcs $uv$ are scanned and all appropriate relax operations are performed. By Lemma~\ref{L:A*}, $d(u)$ will not be changed again. Thus, when the algorithm terminates $d(v)\le d(u)\oplus_B c(uv)$ for every arc $uv\in A$. By Corollary~\ref{C-verify}, we have $d(v)=\delta_B(s,v)$, for every $v\in V$. As in the proof of Theorem~\ref{T-BF} we get that the $\pi$ values describe a tree of minimum energetic paths from~$s$ to all vertices that can be reached from~$s$.

    The algorithm performs at most $n$ heap insertions, at most $n$ heap deletions, and at most $m$ decrease-key operations. With an efficient heap implementation the total running time is $\OO(m+n\log n)$.
\end{proof}

The remaining question is how to obtain a valid potential.  For this we can use any standard shortest path algorithm: If $s$ is an arbitrary source from which all vertices are reachable, there are no negative cycles, and $p(v)=-\delta(s,v)$, where $\delta(s,v)$ is the standard distance from~$s$ to~$v$, then $c(uv)-p(u)+p(v)\ge 0$, for every arc $uv$, by the triangle inequality. (If there is no such vertex~$s$ in the graph, add a new vertex~$s$ and connect it with zero-cost arcs to all other vertices.)

Thus we can compute minimum energetic paths from $k$ sources in $\OO(m+n\log n)$ time per source plus the time to solve one standard one-source shortest path problem with the given arc costs.  The extra time needed for this preprocessing is $\OO(mn)$ if we use Bellman-Ford, $\OO(mn^{1/2}\log C)$ if we use Goldberg's \cite{Goldberg95} shortest path algorithm, or $\OO(m \log^2 n \log(nC)\log\log n)$ time if we use the improvement by Bringmann et al.~\cite{BCF23} of the recent breakthrough result of Bernstein et al.~\cite{BernsteinNW22}; the latter two bounds require integer arc costs no smaller than $-C$. With any one of these choices, we obtain the following corollary:

\begin{corollary}
    The all-pairs minimum energetic paths in a graph $G=(V,A,c)$ with no negative cycles can be solved in $\OO(mn+n^2\log n)$ time.
\end{corollary}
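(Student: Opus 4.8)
The plan is to reduce the all-pairs problem to $n$ single-source problems, each solved by $\EDJ$, and crucially to share one precomputed potential function across all of them. The key observation is that validity of a potential, $c(uv)-p(u)+p(v)\ge 0$ for every $uv\in A$, is a property of the graph and its arc costs alone; it does not depend on the source (or target) of the energetic-paths problem. Hence it suffices to compute a single valid $p$ once and reuse it for all $n$ runs of $\EDJ$.

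First I would compute such a $p$. If the graph has no single vertex from which every other vertex is reachable, add an auxiliary vertex $s_0$ together with a zero-cost arc $s_0v$ for every $v\in V$; these arcs only leave $s_0$, so they create no new cycles and in particular no negative cycle. Running a standard shortest-path algorithm from $s_0$ and setting $p(v)=-\delta(s_0,v)$ yields a valid potential: the triangle inequality $\delta(s_0,v)\le\delta(s_0,u)+c(uv)$ rearranges to $c(uv)-p(u)+p(v)\ge 0$. With Bellman-Ford this step costs $\OO(mn)$ time.

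Second, with $p$ fixed, I would invoke $\EDJ(G,p,B,s)$ for each of the $n$ choices of source $s\in V$. By Theorem~\ref{T-EDJ}, each invocation correctly returns $\delta_B(s,v)$ for all $v$, together with a tree of minimum energetic paths, in $\OO(m+n\log n)$ time using a Fibonacci heap. Summing over the $n$ sources gives $\OO(n(m+n\log n))=\OO(mn+n^2\log n)$. Adding the one-time $\OO(mn)$ preprocessing leaves the total at $\OO(mn+n^2\log n)$, as claimed.

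The only real subtlety is the one already highlighted: the potential must be computed once, not once per source. Recomputing it afresh for each source would push the preprocessing to $\OO(mn^2)$ and destroy the bound. The remaining points---that the auxiliary source $s_0$ makes every vertex reachable without introducing negative cycles, and that $s_0$ (having no incoming arcs) never appears on any genuine energetic path and so does not perturb the computed energetic costs---are routine to verify.
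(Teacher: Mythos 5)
Your proposal is correct and follows essentially the same route as the paper: compute a single valid potential $p(v)=-\delta(s_0,v)$ once via Bellman--Ford (adding an auxiliary source with zero-cost arcs if needed, justified by the triangle inequality), then run \EDJ\ from each of the $n$ sources, reusing that one potential, for a total of $\OO(mn)+n\cdot\OO(m+n\log n)=\OO(mn+n^2\log n)$. The subtlety you flag---that the potential is a property of the graph alone and must be computed only once, not per source---is exactly the point the paper makes when it notes the preprocessing is a one-time cost shared across all $k$ sources.
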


The resulting all-pairs algorithm is very similar to Johnson's \cite{Johnson77} algorithm for the standard all-pairs shortest paths problem. The slight advantage of using the $A^*$ formulation, as we have done, is that no explicit transformation of arc costs is needed, only a simple modification of heap keys. A potential transformation of arc costs was also used by Edmonds and Karp \cite{EdKa72}.

\section{Minimum initial charges and maximum final charges}\label{S-min-init}

To end the paper, we consider two problems that are closely related to the minimum energetic paths problem. Let $G=(V,A,c)$ be a graph with no (traversable) negative cycles and let~$B$ be the capacity of the battery. For two vertices $s,t\in V$, we let $\alpha_B(s,t)$ be the \emph{maximum final charge} with which it is possible to reach~$t$ when starting at~$s$ with a full battery, or $-\infty$, if it is not possible to travel from~$s$ to~$t$. We also let $\beta_B(s,t)$ be the \emph{minimum initial charge} required at~$s$ for getting to~$t$, or $\infty$, if no initial charge (of at most~$B$) is sufficient.

The maximum final charge problem is not really a new problem as $\alpha_B(s,t)=B-\delta_B(s,t)$. As noted in the introduction, $\beta_B(s,t)$ is the smallest $b$ such that $\delta_{B,b}(s,t)\le b$, or equivalently $\delta_{B,b}(s,t)<\infty$, if there is such a~$b$. Thus, if $B$ and all arc costs are integral, then $\beta_B(s,t)$, for a specific pair~$s$ and~$t$, can solved using binary search.

There is, however, a more interesting relation between the minimum initial-charge problem and the minimum energetic cost problem.
Namely, $\beta_B(s,t)$ is equal to $\delta^{\bvec{G}}_B(t,s)$ the energetic cost of traveling from~$t$ to~$s$ in the \emph{reversed graph} $\bvec{G}$, the graph obtained by reversing all the arcs in the graph~$G$ and retaining all arc costs. This relation follows easily from the following lemma, analogous to Lemma~\ref{L-d}, whose simple proof is omitted. For a path~$P$ from~$s$ to~$t$, let $b_B(P)$ be the minimum initial charge at~$s$ with which the path~$P$ can be traversed.

\begin{lemma}\label{L-b}
Let $P=u_0u_1\ldots u_k$ be a directed path, let $P'=u_1\ldots u_{k}$, and let $c_i=c(u_{i-1}u_{i})$, for $i=1,\ldots,k$. Let~$B$ be the capacity of the battery. If $k=0$ then $b_{B}(P)=0$.
If $k>0$ then
\[ b_{B}(P) \EQ c_1 
\oplus_B b_{B}(P') \EQ
c_1 \oplus (c_2 \oplus (\cdots \oplus(c_{k-1}\oplus(c_k\oplus 0))\cdots)) \;.\]
\end{lemma}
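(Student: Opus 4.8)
The plan is to isolate a single recursive identity, $b_B(P) \EQ c_1 \oplus_B b_B(P')$, establish it, and then unfold it. The base case $k=0$ is immediate: a single-vertex path is traversable starting from any charge $b\ge 0$, so its minimum initial charge is $0$. Everything then rests on the inductive step, for which I would first prove a monotonicity property of traversability and only afterward perform a short case analysis.

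\emph{Monotonicity.} Fix any path $Q=w_0w_1\ldots w_\ell$ and, for an initial charge $b$ at $w_0$, let $b_j(b)$ be the charge at $w_j$ along $Q$ (defined while the prefix up to $w_j$ is traversable). Each step is $b_j = \min\{b_{j-1}-c(w_{j-1}w_j),\,B\}$, a non-decreasing function of $b_{j-1}$; composing, every $b_j(\cdot)$ is non-decreasing in $b$, and each traversal condition $c(w_{j-1}w_j)\le b_{j-1}$ is preserved as $b$ grows. Hence the set of initial charges in $[0,B]$ from which $Q$ is traversable is upward closed, i.e.\ it is exactly $[\,b_B(Q),B\,]$ (empty, with $b_B(Q)=\infty$, if no charge up to $B$ suffices). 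I expect this to be the step carrying the real content, since it is what lets $b_B$ be treated as a threshold rather than an arbitrary set.

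\emph{The recursion.} Put $\beta = b_B(P')$. Starting at $u_0$ with charge $b_0$, the path $P$ is traversable iff the first arc can be taken, $c_1 \le b_0$, and $P'$ is traversable from $u_1$ with the resulting charge $b_1 = \min\{b_0-c_1,B\}$; by monotonicity the latter is exactly $b_1 \ge \beta$. If $\beta=\infty$ then $P$ is never traversable and both sides equal $\infty$. Otherwise $0\le \beta\le B$, so $B\ge\beta$ and $\min\{b_0-c_1,B\}\ge\beta$ reduces to $b_0 \ge c_1+\beta$, which (as $\beta\ge0$) already forces $b_0\ge c_1$. Minimizing $b_0$ over $\{b_0\in[0,B] : b_0\ge c_1+\beta\}$ gives $0$ when $c_1+\beta<0$, gives $c_1+\beta$ when $0\le c_1+\beta\le B$, and is infeasible (so $\infty$) when $c_1+\beta>B$ --- precisely $[\,c_1+\beta\,]_0^B = c_1\oplus_B\beta$.

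Unfolding $b_B(u_0\ldots u_k)=c_1\oplus_B b_B(u_1\ldots u_k)$ repeatedly down to the one-vertex base case $b_B(u_k)=0$ then yields the claimed right-associated expression $c_1\oplus(c_2\oplus(\cdots\oplus(c_k\oplus 0)\cdots))$. The only genuine subtlety is the three-way split in the recursion: the cap at $B$ is what produces the $\infty$ outcome (a blocked path), and the floor at $0$ reflects that initial charge below $c_1+\beta$ cannot help --- this is exactly where the clamping built into $\oplus_B$ is used.
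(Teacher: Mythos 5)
Your proof is correct and is essentially the argument the paper intends: the paper omits the proof of Lemma~\ref{L-b} as ``simple'' and analogous to Lemma~\ref{L-d}, and your induction peeling off the first arc --- made rigorous by the monotonicity/threshold observation that the set of feasible initial charges is an interval $[\,b_B(Q),B\,]$ --- is exactly that argument. The three-way case analysis matching the clamping in $\oplus_B$, including the $\beta=\infty$ case, is handled correctly.
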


\begin{corollary}
    For every $s,t\in V$, $\beta_B(s,t)=\delta^{\bvec{G}}_B(t,s)$.
\end{corollary}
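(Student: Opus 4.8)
The plan is to prove the equality path by path and then minimize. Writing $P=u_0u_1\ldots u_k$ for an $s$-$t$ path in $G$ (so $u_0=s$, $u_k=t$) with $c_i=c(u_{i-1}u_i)$, reversing its arcs yields a $t$-$s$ path $\bvec{P}=u_ku_{k-1}\ldots u_0$ in $\bvec{G}$, and this is a bijection between the $s$-$t$ paths of $G$ and the $t$-$s$ paths of $\bvec{G}$. The costs encountered along $\bvec{P}$, in order of traversal, are $c_k,c_{k-1},\ldots,c_1$. Since $\beta_B(s,t)=\min_P b_B(P)$ (the minimum initial charge that reaches $t$ is the best over all paths of the per-path minimum) and $\delta^{\bvec{G}}_B(t,s)=\min_{\bvec{P}} d_B(\bvec{P})$ by the definition of energetic cost applied in $\bvec{G}$, and since these two minima range over corresponding families, it suffices to show $b_B(P)=d_B(\bvec{P})$ for every single path $P$.

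For this I would invoke the two formulas directly. Lemma~\ref{L-b} gives $b_B(P)=c_1\oplus(c_2\oplus(\cdots\oplus(c_k\oplus 0)\cdots))$, while Lemma~\ref{L-d}, applied in $\bvec{G}$ to the path $\bvec{P}$ whose traversal costs are $c_k,\ldots,c_1$, gives $d_B(\bvec{P})=((\cdots(0\oplus c_k)\oplus c_{k-1})\oplus\cdots)\oplus c_1$. The key observation is that $\oplus_B$, though not associative, \emph{is} commutative, since $x\oplus_B y=[x+y]_0^B=[y+x]_0^B=y\oplus_B x$ (this holds including the $\infty$ case, as $x+\infty=\infty+x$). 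Introducing the single-variable clamp-and-add map $f_c(x)=x\oplus_B c=[x+c]_0^B$, commutativity gives $c\oplus_B x=f_c(x)$ as well. Then both nested expressions unfold into the \emph{same} composition applied to $0$: the right-nested expression for $b_B(P)$ equals $f_{c_1}\circ f_{c_2}\circ\cdots\circ f_{c_k}$ evaluated at $0$ (innermost $c_k\oplus 0=f_{c_k}(0)$ applied first, $f_{c_1}$ last), and the left-nested expression for $d_B(\bvec{P})$ likewise equals $f_{c_1}\circ f_{c_2}\circ\cdots\circ f_{c_k}$ evaluated at $0$ ($f_{c_k}$ applied to $0$ first, $f_{c_1}$ last). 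Hence $b_B(P)=d_B(\bvec{P})$, and taking the minimum over all $s$-$t$ paths of $G$ (equivalently, over all $t$-$s$ paths of $\bvec{G}$) yields $\beta_B(s,t)=\delta^{\bvec{G}}_B(t,s)$.

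I expect the only real subtlety to be psychological rather than technical: because $\oplus_B$ is non-associative, the left-nested and right-nested foldings look genuinely different, and one is tempted to search for a delicate trajectory-level correspondence between traversing $P$ forward on the minimum admissible charge and traversing $\bvec{P}$ backward on a full battery. Such a correspondence does \emph{not} hold state by state once the battery saturates at $B$, which is exactly where naive attempts break down; the clean resolution is to bypass trajectories entirely and note that commutativity alone collapses both foldings to the identical composition $f_{c_1}\circ\cdots\circ f_{c_k}$. One should also confirm, exactly as in the treatment of $\delta_B$, that restricting both minima to simple paths is justified by the absence of traversable negative cycles, so that each quantity is attained and the path-by-path identity transfers to the minima.
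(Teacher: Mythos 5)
Your proof is correct and follows essentially the same route the paper intends: it instantiates Lemma~\ref{L-b} and Lemma~\ref{L-d} (the latter in $\bvec{G}$ on the reversed path) and matches the two foldings, which is exactly the ``follows easily from the following lemma'' step the paper leaves implicit. Your explicit observation that commutativity of $\oplus_B$ alone collapses the right-nested and left-nested expressions to the same composition $f_{c_1}\circ\cdots\circ f_{c_k}$ applied to $0$ is precisely the detail needed to make that step rigorous, so there is no gap.
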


As immediate corollaries, it follows that we can solve the single-target version of the minimum initial-charge paths problem in $\OO(m+n\log n)$ time, 
if we are given a valid potential, and the all-pairs version of the problem in $\OO(mn+n^2\log n)$.

\section{Concluding remarks and open problems}\label{S:concl}

Our goal was to present simple algorithm for solving the minimum energetic paths problem in the absence of negative cycles. Most of the results in this paper are not new. Our formulation of the minimum energetic paths problem, which we believe helps simplify and clarify the results and their relation to standard shortest paths algorithms, seems to be new, however. The simple relation between minimum energetic paths and minimum initial-charge paths also seems to be new.

An interesting feature of the minimum energetic paths problem is that it is not symmetric, i.e., the single-target version of the problem is not equivalent to the single-source version. The best algorithm that we currently have for the single-target version actually solves the all-pairs version of the problem. It would be interesting to know whether there is a more efficient solution. The fact that there may not be a tree of minimum energetic paths to a given target may indicate that the single-target version is harder than the single-source version.

In a companion paper \cite{DKTZ23} we extend some of the results presented here to the setting in which negative cycles may be present. The algorithms required in that case are much more complicated.

\bibliographystyle{plain}
\bibliography{bibliography}

\end{document}